\newcommand{\SHN}[1]{\textcolor{cyan}{#1}}
\newtheorem{defn}{Definition}
\newtheorem{theorem}{Theorem}
\DeclareMathOperator*{\argmin}{arg\ min}
\title{\LARGE \bf Learning for Online Mixed-Integer Model Predictive Control with Parametric Optimality Certificates
}
\author{Luigi Russo*, Siddharth H. Nair*,  Luigi Glielmo, Francesco Borrelli
\thanks{* denotes equal contribution. SHN and FB are with the Model Predictive Control lab, University of California, Berkeley. LR and LG are with University of Sannio. Corresponding author: \tt\scriptsize{siddharth\_nair@berkeley.edu} 
}
}
\begin{document}

\maketitle
\thispagestyle{empty}
\pagestyle{empty}

\begin{abstract}
We propose a supervised learning framework for computing solutions of multi-parametric Mixed Integer Linear Programs (MILPs) that arise in Model Predictive Control. Our approach also quantifies sub-optimality for the computed solutions. Inspired by Branch-and-Bound techniques, the key idea is to train a Neural Network/Random Forest, which for a given parameter, predicts a strategy consisting of (1) a set of Linear Programs (LPs) such that their feasible sets form a partition of the feasible set of the MILP and (2) a candidate integer solution. For control computation and sub-optimality quantification, we solve a set of LPs online in parallel. We demonstrate our approach for a motion planning example and compare against various commercial and open-source mixed-integer programming solvers.
\end{abstract}

 
\section{Introduction}\label{sec:Intro}
Multi-parametric Mixed-Integer Programming (mp-MIP) is a convenient framework for modelling various non-convex motion planning and constrained optimal control problems \cite{ioan2021mixed}. The mixed-integer formulation can model constraints such as collision avoidance \cite{marcucci2022motion}, mixed-logical specifications \cite{tokuda2021convex} and mode transitions for hybrid dynamics \cite{heemels2001equivalence}. The multi-parametric nature of these mp-MIPs arises from requiring to solve these problems for different initial conditions, obstacles configurations or system constraints---all of which affect the MIP solution.  When  Model Predictive Control (MPC)~\cite{morari1999model, borrelli2017predictive} is used for such class of problems, a MIP has to be solved  in a receding horizon fashion at each time step. However, computing solutions for MIPs is $\mathcal{NP}-$hard and challenging for real-time ($\geq$10Hz) applications.

There are two broad approaches towards solving these MIPs online for real-time MPC. The first approach is Explicit MPC \cite{borrelli2017predictive, bemporad2000optimal} which involves offline computation of the solution map of the mp-MIP explicitly as piece-wise functions over partitions of the parameter space, so that online computation is reduced to a look-up. However this approach is best suited for mp-MIPs of moderate size because the complexity of the online look-up and offline storage of partitions, increases rapidly with scale \cite{cimini2017exact}. The second approach for real-time mixed-integer MPC relies on predicting warm-starts for the mp-MIP by training  Machine Learning (ML) models on large offline datasets \cite{masti2019learning, zhu2020fast, srinivasan2021fast, bertsimas2022online, cauligi2021coco}. The authors of \cite{masti2019learning, zhu2020fast, srinivasan2021fast} use various supervised learning frameworks to predict the optimal integer variables for the mp-MIP at a given parameter so that the online computation is reduced to solving a convex program.
\begin{figure}
\vskip -0.2 true in
    \centering
    \includegraphics[width=0.8\columnwidth]{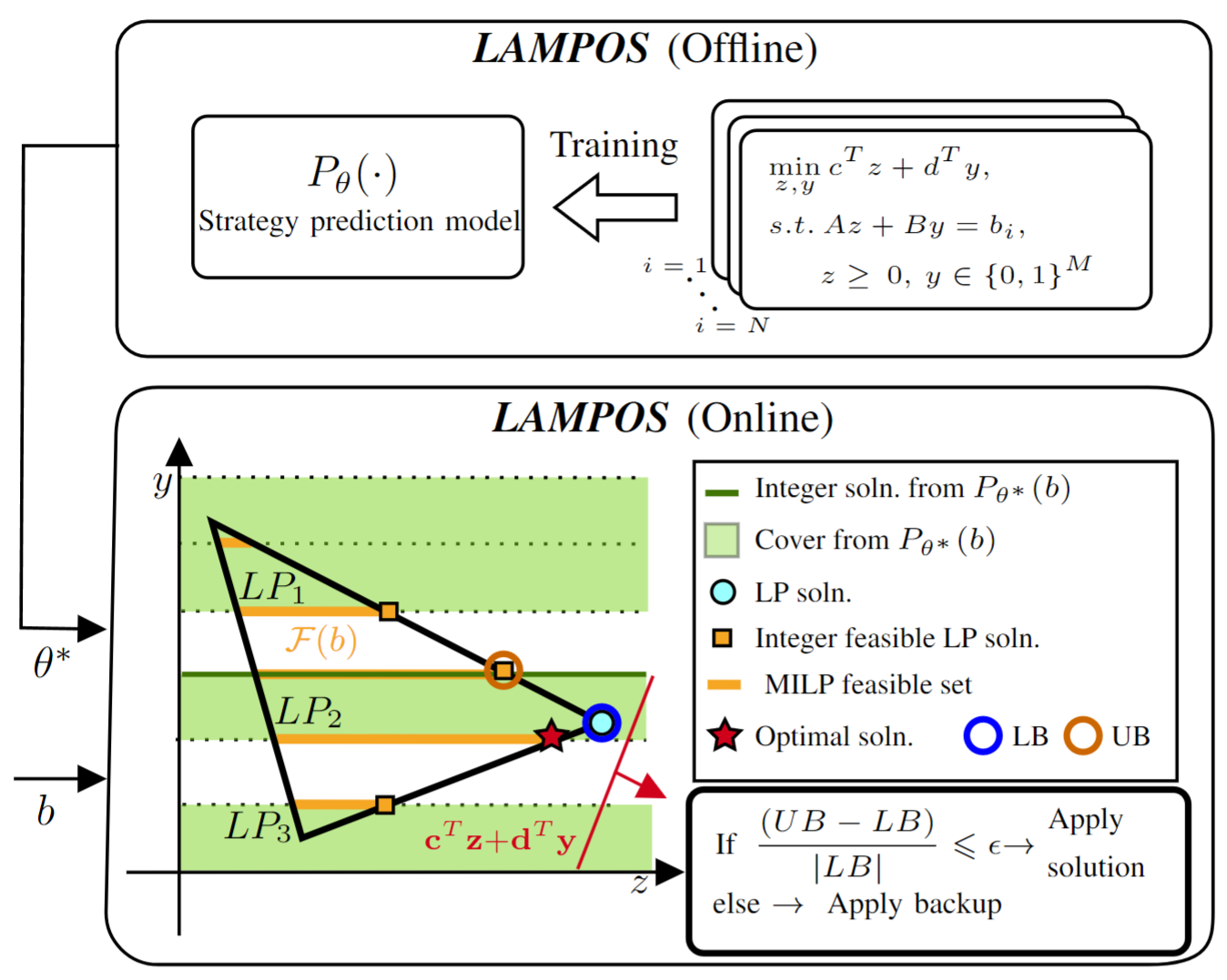}
    \vskip -0.1 true in
    {\small{
    \caption{We propose LAMPOS, a strategy-based solution approach for mp-MILPs for real-time MPC. Offline, a prediction model $P_\theta(\cdot)$ is trained on various MILP instances to learn a \text{strategy} $s(\cdot)$, mapping parameters $b$ to an optimal integer solution and a set of LPs (called a \text{cover}) obtained from the leaves of the BnB tree. Online, a solution to the MILP is obtained from the predicted strategy $s(b)$ by solving a set of LPs in parallel. The proposed strategy allows (1) sub-optimality quantification of MILP feasible solutions, and (2) recovery of MILP solution if none were found from the LPs.}}}
    \label{fig:LAMPOS_overview}
    \vskip -0.2 true in
\end{figure}
\vskip -0.1 true in
In \cite{bertsimas2022online, cauligi2021coco}, the authors define the notion of an \textit{optimal strategy} for a mp-MIP as a mapping from parameters to the complete information required to efficiently recover an optimal solution. For multi-parametric Mixed-Integer Linear/Quadratic Programs (mp-MILPs/MIQPs), an optimal strategy is defined as \textit{a set of integer variables and active constraints at the optimal solution}. Given an optimal strategy, an optimal solution can be recovered by solving a linear system of equations which is computationally inexpensive compared to tree search methods typically used for solving MIPs, such as Branch-and-Bound (BnB). Thus, a prediction model is trained offline to predict the optimal strategy for efficiently solving the MIPs online. However, a common issue that plagues these ML-based approaches is the inability to assess the quality of the predicted warm-start/strategy to guard against poor predictions, which can lead to sub-optimal or infeasible solution predictions. Indeed, prediction models may perform poorly for various reasons: insufficient richness of the model parameterization, significant shift between the training and test distribution, convergence of the training algorithm towards a sub-optimal minimum \cite{zhang2021understanding}.

In this work, we focus on mp-MILPs and propose a supervised learning framework for predicting strategies to efficiently solve the MILP, along with a mechanism to measure the sub-optimality of the prediction.
The authors of \cite{zhang2020near} propose a framework for certifying the quality of predicted solutions for parametric convex Quadratic Programs (QPs) using strong duality. The main idea is to train prediction modules offline that predict optimal solutions for both the primal QP and its Lagrangian dual. For primal and dual feasible predictions (after projection to the feasible set if necessary), the quality of the predictions can be assessed from the duality gap because of strong duality. Since strong duality does not hold in general in the framework of Lagrangian duality for MILPs, we do not adopt this approach. However we draw inspiration from \cite{zhang2020near} and the optimality certification procedure in Branch-and-Bound, to propose a strategy definition for mp-MILPs, accompanied by bounding functions to quantify the sub-optimality of the strategy. This enables us to efficiently recover the solution of a MILP online from the predicted strategy by solving some LPs online in parallel, and also measure the sub-optimality of the recovered solution. Using ideas from multi-parametric programming, we show the parametric behaviour of our proposed strategy definition. We complement this insight with a supervised learning framework for training a prediction model offline, which predicts strategies for solving the MILP online.

\section{Problem Formulation}\label{sec:pf}
Consider the general formulation for Mixed-Integer MPC (MIMPC) adapted from \cite{borrelli2017predictive}: 
\begin{align}\label{eq:MIMPC}
{\small{\begin{aligned}
V^\star(x_t)=\min_{\substack{\boldsymbol{x}_{t},\boldsymbol{u}_{t},\\\boldsymbol{\delta}_{t},\boldsymbol{z}_{t}}}\ & ||Px_{t+N|t}||_p+\sum_{k=t}^{t+N}||Q\begin{bmatrix}x_{k|t}\\\delta_{k|t}\end{bmatrix}||_p+||Ru_{k|t}||_p,\\
    \text{s.t.}\ & x_{k+1|t}=Ax_{k|t}+B_1u_{k|t}+B_2\delta_{k|t}+B_3z_{k|t},\\
    \ &E_2\delta_{k|t}+E_3z_{k|t}\leq E_1u_{k|t}+E_4x_{k|t}+E_5,\\
    \ & x_{t|t}=x_t,~\delta_{k|t}\in\{0,1\}^{n_{\delta}}~\forall k=t,..,t+N-1
    \end{aligned}}}
\end{align}
where $x_t$ is system state at time $t$,  $p=1\text{ or }\infty$, and the decision variables $\boldsymbol{x}_t=[x_{t|t},..,x_{t+N|t}]$, $\boldsymbol{u}_t$, $\boldsymbol{\delta}_t$, $\boldsymbol{z}_t$ (defined similarly) are the states, inputs, binary variables and auxiliary variables respectively. The optimal solution to \eqref{eq:MIMPC} defines the MPC policy as $\pi_{MPC}(x_t)=u^\star_{t|t}$.

The optimization problem \eqref{eq:MIMPC} can be expressed as a multi-parametric Mixed-Integer Linear Program (mp-MILP), with the parameters being the system state $x_t$. The mp-MILP can be concisely expressed as follows:
\begin{align}\label{eq:MILP}
\begin{aligned}
    V^\star(b)=\min_{z,y}\quad& c^\top z+d^\top y,\\
    \text{s.t.}\quad& Az+By=b,\\
    &z\geq 0,~y\in\{0,1\}^M
    \end{aligned}
\end{align}
with continuous decision variables $z\in\mathbb{R}^n$, binary decision variables $y\in\{0,1\}^M$ and parameters $b\in\mathbb{R}^m$. Let $z^\star(b), y^\star(b)$ be an optimal solution to \eqref{eq:MILP} and $V^\star(b)$ be the optimal cost. For a given parameter $b$, let $\mathcal{F}(b)$ be the set of $(z,y)$ feasible for \eqref{eq:MILP} and $V(b,z,y)$ be the cost of any $(z,y)\in\mathcal{F}(b)$, with sub-optimality given by $\frac{V(b,z,y)-V^\star(b)}{|V^\star(b)|}$. Also define $\mathbb{B}=\{b\in\mathbb{R}^m| \mathcal{F}(b)\neq\emptyset\}$ as the set of parameters for which \eqref{eq:MILP} is feasible.

In this work, we aim to exploit the parametric nature of the mp-MILP \eqref{eq:MILP} to predict a solution ($\tilde{z}(b),\tilde{y}(b))\in\mathcal{F}(b)$ for real-time MPC, and quantify its sub-optimality using \textit{strategies}. The strategy maps a parameter $b$ to an element of a finite and discrete set $\mathbb{S}$, which describes the complete information necessary to recover a feasible point ($z(b), y(b)$)  for \eqref{eq:MILP} (if it exists), formally defined next. 
\begin{defn}
A function $s: \mathbb{B}\rightarrow \mathbb{S}$ is a strategy for mp-MILP \eqref{eq:MILP} if there exists a map $R(\cdot)$ such that $\forall b\in\mathbb{B}: R(b, s(b))=(z(b),y(b))\in\mathcal{F}(b)$.
\end{defn}
For example, in \cite{bertsimas2022online} the set $\mathbb{S}$ is given by all possible sets of active constraints for \eqref{eq:MILP} and for each $b\in\mathbb{B}$, $s(b)$ picks the active constraints for a $(z,y)\in\mathcal{F}(b)$. The recovery map is then given as the solution of a linear system of equations.

The strategy $s^\star(b)$ is said to be \textit{optimal} at $b\in\mathbb{B}$ if $R(b, s^\star(b))=(z^\star(b), y^\star(b))$. We construct  functions $V_{\mathrm{lb}}(\cdot,\cdot)$, $V_{\mathrm{ub}}(\cdot,\cdot)$ that satisfy the following properties:
\begin{enumerate}
    \item For any  $(z,y)\in\mathcal{F}(b)$ such that $R(b,s(b))=(z,y)$,  \vskip -0.15 true in
    $$V_{\mathrm{lb}}(b,s(b))\leq V(b,z,y)\leq V_{\mathrm{ub}}(b,s(b)).$$
    \item For the optimal strategy $s^\star(b)$,
    $$V_{\mathrm{lb}}(b,s^\star(b))= V^\star(b)=V_{\mathrm{ub}}(b,s^\star(b)).$$
\end{enumerate}
 For any $b\in\mathbb{B}$, we use $V_{\mathrm{lb}}(\cdot,\cdot)$, $V_{\mathrm{ub}}(\cdot,\cdot)$ to  estimate the quality of a strategy $s(b)$ with respect to $s^\star(b)$. In particular,  the sub-optimality of a predicted strategy $\tilde{s}(b)$ is over-estimated as $\left|\frac{V_{\mathrm{ub}}(b,\tilde{s}(b))-V_{\mathrm{lb}}(b,\tilde{s}(b))}{V_{\mathrm{lb}}(b,\tilde{s}(b))}\right|$ by using the recovered solution $R(b,\tilde{s}(b))=(\tilde{z}(b),\tilde{y}(b))$.

\subsubsection*{Organization}First, we present our choice of strategy $s(\cdot)$, the recovery map $R(\cdot)$, and the bounding functions $V_{\mathrm{lb}}(\cdot)$, $V_{\mathrm{ub}}(\cdot)$ that meet the desired properties in Section~\ref{sec:strat}. Then in Section~\ref{sec:SL_MILP} we propose a supervised learning framework to approximate the optimal strategy $s^\star(b)$, and evaluate $R(b, s^\star(b))$, $V_{\mathrm{lb}}(b, s^\star(b)), V_{\mathrm{ub}}(b, s^\star(b))$ efficiently for predicting solutions to \eqref{eq:MIMPC} online, and evaluate its sub-optimality. Finally, we demonstrate our approach for motion planning using MIMPC and compare against open-source and commercial MILP solvers in Section~\ref{sec:num}.
\section{Strategy-based Solution to mp-MILPs}\label{sec:strat}
In this section, we present our design of the strategy $s(\cdot)$, the recovery map $R(b, s(b))$ and the bounding functions $V_{\mathrm{lb}}(b,s(b)), V_{\mathrm{ub}}(b,s(b))$, along with theoretical justification using ideas from the mp-MILP literature. \vskip -0.15 true in
\subsection{Preliminaries: Solving MILPs using Branch-and-Bound}
Branch-and-Bound (BnB) is a tree search algorithm that solves MILPs, with each node given as the LP sub-problem
\begin{align}\label{eq:LP_SP}
\begin{aligned}
    V^\star_{LP}(b, \mathrm{lb}, \mathrm{ub})=\min_{z,y}\quad& c^\top z+d^\top y,\\
    \text{s.t}\quad& Az+By=b,\\
    &z\geq 0,~\mathrm{lb} \leq y\leq \mathrm{ub},
    \end{aligned}
\end{align}
where the binary variable bounds $\mathrm{lb},\mathrm{ub}\in\{0,1\}^M$. For any $b\in\mathbb{R}^m$, let $\mathcal{F}_{LP}(b,\mathrm{lb},\mathrm{ub})$, $(z^\star(b,\mathrm{lb},\mathrm{ub}),y^\star(b,\mathrm{lb},\mathrm{ub}))$ denote its feasible set and optimal solution respectively. At iteration $i$ of BnB, a collection of sub-problems identified by $\mathcal{C}^i=\{\{\mathrm{lb}^i_k,\mathrm{ub}^i_k\}_{k=1}^{n_i}\}$ is maintained such that they form a \textit{cover} over the set of binary sequences $\{0,1\}^M$:
\begin{align*}\bigcup_{k=1}^{n^i}[\mathrm{lb}^i_{k},\mathrm{ub}^i_k]\supseteq\{0,1\}^M
\Rightarrow\bigcup_{k=1}^{n^i}\mathcal{F}_{LP}(b,\mathrm{lb}^i_k,\mathrm{ub}^i_k)\supseteq \mathcal{F}(b).
\end{align*}
A lower bound on $V^\star(b)$ at iteration $i$ is given as\vskip -0.2 true in
\begin{align*}
\underline{V}^i(b)=\min_{k\in\{1,\dots,n^i\}}V^\star_{LP}(b, \mathrm{lb}^i_k,\mathrm{ub}^i_k)\leq V^\star(b),
\end{align*}
which can be shown in three steps: \begin{enumerate}
\item Let $(\bar{z},\bar{y})=\argmin\{c^\top z+d^\top y| (z,y)\in\bigcup_{k=1}^{n^i}\mathcal{F}_{LP}(b,\mathrm{lb}^i_k,\mathrm{ub}^i_k)\}$ and $\bar{k}\in\{1,..,n^i\}$ be the sub-problem such that $(\bar{z},\bar{y})\in\mathcal{F}_{LP}(b,\mathrm{lb}^i_{\bar{k}},\mathrm{ub}^i_{\bar{k}})$. Then $c^\top\bar{z}+d^\top\bar{y}=V^\star_{LP}(b,\mathrm{lb}^i_{\bar{k}},\mathrm{ub}^i_{\bar{k}})$ due to global optimality of the $\bar{k}$th LP sub-problem.
\item Observe that $V^\star_{LP}(b,\mathrm{lb}^i_{\bar{k}},\mathrm{ub}^i_{\bar{k}})=\underline{V}^i(b)$, because otherwise, $\exists l\in\{1,..,n_i\}$ such that $V^\star_{LP}(b, \mathrm{lb}^i_l,\mathrm{ub}^i_l)<V^\star_{LP}(b,\mathrm{lb}^i_{\bar{k}},\mathrm{ub}^i_{\bar{k}})$, which implies the contradiction $\min\{c^\top z+d^\top y| \mathcal{F}_{LP}(b,\mathrm{lb}^i_l,\mathrm{ub}^i_l)\}<\min\{c^\top z+d^\top y|\bigcup_{k=1}^{n^i}\mathcal{F}_{LP}(b,\mathrm{lb}^i_k,\mathrm{ub}^i_k)\}$.

\item Finally since the sup-problems form a cover, $\underline{V}^i(b)=\min\{c^\top z+d^\top y |(z,y)\in\bigcup_{k=1}^{n_i}\mathcal{F}_{LP}(b,\mathrm{lb}^i_k,\mathrm{ub}^i_k)\}\leq \min\{c^\top z+d^\top y|(z,y)\in \mathcal{F}(b)\}=V^\star(b).$
\end{enumerate}

Define set of indices $\mathcal{I}^i\subseteq\{1,..,n^i\}$ such that their corresponding sup-problems have solutions that are also feasible for \eqref{eq:MILP}, i.e., 
{\small{\begin{align*}
\mathcal{I}^i=\{k\in{1,..,n^i}| (z^\star_{LP}(b,\mathrm{lb}^i_k,\mathrm{ub}^i_k),y^\star_{LP}(b,\mathrm{lb}^i_k,\mathrm{ub}^i_k))\in\mathcal{F}(b) \}.
\end{align*}}}
Then an upper bound on $V^\star(b)$ at iteration $i$ is given as,
\begin{align*}
    V^\star(b)\leq\bar{V}^i(b)=\begin{aligned}\begin{cases} \min_{k\in\mathcal{I}^i}V^\star_{LP}(b, \mathrm{lb}^i_k,\mathrm{ub}^i_k), & \mathcal{I}^i\neq\emptyset,\\
    \infty & \mathcal{I}^i=\emptyset\end{cases}\end{aligned},
\end{align*}
which is evident because $V^\star(b)\leq V^\star_{LP}(b,\mathrm{lb}^i_k,\mathrm{ub}^i_k)~\forall k\in\mathcal{I}^i$. If $\mathcal{I}^i=\emptyset$, often rounding heuristics are applied to some sub-problem solutions to produce a feasible solution in $\mathcal{F}(b)$. This describes the \textit{bounding} process of BnB.

 If $\underline{V}^i(b)\neq \bar{V}^i(b)$, then the search proceeds to the next iteration via the \textit{branching} process, which constructs a new cover $\mathcal{C}^{i+1}$ from $\mathcal{C}^i$ by splitting a sub-problem, say, $\{\mathrm{lb}^i_k,\mathrm{ub}^i_k\}$ into two new sub-problems $\{\{\mathrm{lb}^{i+1}_k,\mathrm{ub}^{i+1}_k\}, \{\mathrm{lb}^{i+1}_{k+1},\mathrm{ub}^{i+1}_{k+1}\}\}$ by fixing one or more variables to $0$ in one sub-problem, and to $1$ in the other. The branching decisions depend on $\underline{V}^i(b), \bar{V}^i(b)$, the optimal sub-problem solutions, and some tree search heuristics.

The search begins with the root node given by $\mathcal{C}^0=\{\{\mathbf{0}_M,\mathbf{1}_M\}\}$ defining the LP relaxation of \eqref{eq:MILP}. The search terminates when $\underline{V}^i(b)=\bar{V}^i(b)$ and the optimal solution is given by the feasible solution that yields  $\bar{V}^i(b)$. This optimality certificate is represented by \begin{enumerate} 
\item the optimal cover $\mathcal{C}^\star(b)=\{\{\mathrm{lb}_k^\star, \mathrm{ub}_k^\star\}_{k=1}^{n^\star}\}$ describing the LP sub-problems at the terminal iteration, 
\item the optimal binary solution $y^\star(b)$ obtained from the sub-problem corresponding to the upper-bound $\bar{V}^i(b)$.
\end{enumerate}
\subsection{Strategy Description for Parametric MILPs}
Inspired by the optimality certificate obtained from BnB, we propose the following strategy, bounding functions and recovery map:
\begin{subequations}\label{eq:strategy_bounding functions}
\begin{align}
    s(b)&=\{\mathcal{C}^\star(b), y^\star(b)\},\label{eq:sb}\\
V_{\mathrm{lb}}(b,s(b))&=\min_{k\in\{1,..,n^\star\}}V^\star_{LP}(b, \mathrm{lb}^\star_k,\mathrm{ub}^\star_k),\label{eq:LB}\\
V_{\mathrm{ub}}(b,s(b))&=\min_{\substack{Az+By^\star(b)=b,~z\geq 0}}\quad c^\top z+d^\top y^\star(b),\label{eq:UB}\\
R(b,s(b))&=\argmin_{\substack{Az+By^\star(b)=b,~z\geq 0}}\quad c^\top z+d^\top y^\star(b).\label{eq:Rb}
\end{align}
\end{subequations} The strategy $s(\bar{b})$ for parameter $\bar{b}$ is optimal if it certifies optimality of the MILP \eqref{eq:MILP} via $V_{\mathrm{lb}}(\bar{b},s(\bar{b}))=V^\star(\bar{b})=V_{\mathrm{ub}}(\bar{b},s(\bar{b}))$. The next theorem highlights the parametric behaviour of the optimality certificate provided by $s(\bar{b})$, i.e., the set of parameters $\mathcal{P}_{\bar{b}}$ for which $s(\bar{b})$ remains optimal. Thus, for any parameter $b\in\mathcal{P}_{\bar{b}}$, the optimal solution can be computed via \eqref{eq:Rb} without BnB.

\begin{theorem}\label{thm:strategy_params} Let $s^\star(\bar{b})=\{\mathcal{C}^\star(\bar{b}), y^\star(\bar{b})\}$ be the optimal strategy for solving MILP \eqref{eq:MILP} with the parameter $\bar{b}$. Then there is a set of parameters $\mathcal{P}_{\bar{b}}\subset\mathbb{B}$, given by a union of convex polyhedra for which $s^\star(\bar{b})$ is also optimal,
\begin{align*}
V_{\mathrm{lb}}(b, s^\star(\bar{b}))=V^\star(b)=V_{\mathrm{ub}}(b,s^\star(\bar{b}))~~\forall b\in\mathcal{P}_{\bar{b}} 
\end{align*}
\end{theorem}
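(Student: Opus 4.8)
The plan is to reduce the claimed joint equality to a single scalar condition and then exploit the piecewise-affine dependence of LP value functions on a right-hand-side parameter. Concretely, I would first observe that for every $b$ the two bounding functions sandwich the optimal cost, $V_{\mathrm{lb}}(b,s^\star(\bar b))\le V^\star(b)\le V_{\mathrm{ub}}(b,s^\star(\bar b))$, so that proving the strategy optimal at $b$ amounts to proving that the gap $V_{\mathrm{ub}}(b,s^\star(\bar b))-V_{\mathrm{lb}}(b,s^\star(\bar b))$ vanishes. The set $\mathcal{P}_{\bar b}$ is then exactly the zero set of this nonnegative gap, and the whole argument becomes a matter of showing that this zero set is a finite union of polyhedra containing $\bar b$.

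To establish the sandwich I would reuse the two mechanisms already in the excerpt. For the lower bound, the cover $\mathcal{C}^\star(\bar b)=\{\{\mathrm{lb}^\star_k,\mathrm{ub}^\star_k\}\}_{k=1}^{n^\star}$ covers the binary hypercube $\{0,1\}^M$ \emph{independently of} $b$; hence $\bigcup_k\mathcal{F}_{LP}(b,\mathrm{lb}^\star_k,\mathrm{ub}^\star_k)\supseteq\mathcal{F}(b)$ for every $b$, and the three-step argument of the preliminaries yields $V_{\mathrm{lb}}(b,s^\star(\bar b))=\min_k V^\star_{LP}(b,\mathrm{lb}^\star_k,\mathrm{ub}^\star_k)\le V^\star(b)$ for all $b\in\mathbb{B}$. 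For the upper bound, whenever the $z$-LP in \eqref{eq:UB} is feasible its minimizer $R(b,s^\star(\bar b))$ together with the fixed $y^\star(\bar b)$ is a point of $\mathcal{F}(b)$, so $V^\star(b)\le V_{\mathrm{ub}}(b,s^\star(\bar b))$. Consequently, optimality of $s^\star(\bar b)$ at $b$ is equivalent to $V_{\mathrm{ub}}(b,s^\star(\bar b))\le V^\star_{LP}(b,\mathrm{lb}^\star_k,\mathrm{ub}^\star_k)$ for every $k$, taken over the parameters for which \eqref{eq:UB} is feasible.

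The core step is parametric sensitivity analysis for linear programs whose right-hand side depends affinely on $b$. Each subproblem value $V^\star_{LP}(b,\mathrm{lb}^\star_k,\mathrm{ub}^\star_k)$ and the upper-bound value $V_{\mathrm{ub}}(b,s^\star(\bar b))$ (an LP in $z$ with right-hand side $b-By^\star(\bar b)$) are, by standard multi-parametric LP theory, convex piecewise-affine functions of $b$ whose domains and affine pieces are indexed by optimal bases; over the critical region associated with a fixed optimal basis $\mathcal{B}$ the value is a single affine function of $b$ and the region itself is the convex polyhedron cut out by the primal-feasibility inequalities $A_{\mathcal B}^{-1}(b-\cdot)\ge 0$. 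On the intersection of the critical regions containing $\bar b$, each condition $V_{\mathrm{ub}}(b,s^\star(\bar b))\le V^\star_{LP}(b,\mathrm{lb}^\star_k,\mathrm{ub}^\star_k)$ is affine-versus-affine, i.e.\ a halfspace, and feasibility of \eqref{eq:UB} is polyhedral; their intersection is a convex polyhedron containing $\bar b$ (it contains $\bar b$ since the gap is zero there), and on it the sandwich forces $V_{\mathrm{lb}}=V^\star=V_{\mathrm{ub}}$ throughout, so the two affine pieces coincide on the whole region rather than only at $\bar b$. Ranging over the finitely many basis combinations that can be optimal as $b$ varies yields $\mathcal{P}_{\bar b}$ as a finite union of such convex polyhedra, all contained in $\mathbb{B}$ since feasibility of \eqref{eq:UB} forces $\mathcal{F}(b)\neq\emptyset$.

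The main obstacle I anticipate is degeneracy: when the subproblem LPs or the $z$-LP in \eqref{eq:UB} admit several optimal bases, no single critical region captures the full neighborhood of $\bar b$, and the piecewise-affine value functions must be assembled from several affine pieces. This is precisely why $\mathcal{P}_{\bar b}$ is a \emph{union} of polyhedra rather than a single one, and it is where care is needed to argue that the union still covers a neighborhood of $\bar b$. The remaining bookkeeping---checking that intersections and unions of finitely many polyhedra stay within the claimed class---is routine once the sandwich and the parametric-LP structure are in place.
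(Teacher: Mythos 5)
Your argument is correct and rests on the same foundation as the paper's proof---multi-parametric LP sensitivity applied to the fixed cover $\mathcal{C}^\star(\bar b)$, which covers $\{0,1\}^M$ independently of $b$---but the two encode the optimality condition differently. The paper works with the \emph{optimizers}: it invokes the mp-LP results of Borrelli et al.\ to obtain polyhedral regions on which $(z^\star(b,\mathrm{lb}_k,\mathrm{ub}_k),y^\star(b,\mathrm{lb}_k,\mathrm{ub}_k))$ is affine in $b$, forms the graph sets $\mathcal{Z}^k$ together with the restriction $\mathcal{Z}^\star$ of the winning sub-problem's graph to $y=y^\star(\bar b)$, and defines $\mathcal{P}_{\bar b}$ as the projection onto $b$ of the cost-comparison inequalities between these graphs; its defining condition is that sub-problem $\bar k$ still attains the minimum over the cover \emph{and} its optimizer's binary part still equals $y^\star(\bar b)$. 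You instead work with the \emph{value functions}: the sandwich $V_{\mathrm{lb}}\le V^\star\le V_{\mathrm{ub}}$ reduces optimality of the strategy at $b$ to the single condition $V_{\mathrm{ub}}(b,s^\star(\bar b))\le \min_k V^\star_{LP}(b,\mathrm{lb}^\star_k,\mathrm{ub}^\star_k)$, and piecewise affinity of the value functions over polyhedral critical regions turns this into finitely many affine-versus-affine comparisons on a common polyhedral refinement. Your encoding matches the theorem's statement verbatim, avoids requiring the sub-problem optimizer's binary part to remain integral (only that the LP \eqref{eq:UB} with $y$ fixed to $y^\star(\bar b)$ catches up to the lower bound), and hence describes a region at least as large as the paper's. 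Two minor remarks: your concern about covering a neighborhood of $\bar b$ is moot, since the theorem only asserts existence of a union of polyhedra containing $\bar b$, not full-dimensionality; and your ``for every $k$'' quantification is actually the more careful treatment of sub-problems that are infeasible at $\bar b$ but become feasible elsewhere (where $V^\star_{LP}=+\infty$ makes the constraint vacuous), a case the paper's restriction to $\mathcal{S}^\star_{\bar b}$ glosses over.
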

\begin{proof}
Let $\mathcal{S}_{\bar{b}}^\star\subset\{1,.., n^\star\}$ be the set of feasible sub-problems in the cover $\mathcal{C}^\star(\bar{b})=\{\{\mathrm{lb}^\star_k, \mathrm{ub}_k^\star\}_{k=1}^{n^\star}\}$, and let $\bar{k}$ be the optimal sub-problem, for which $y^\star(\bar{b}, \mathrm{lb}^\star_{\bar k}, \mathrm{ub}^\star_{\bar k})=y^\star(\bar{b})$ and $V^\star_{LP}(\bar{b}, \mathrm{lb}^\star_{\bar k}, \mathrm{ub}^\star_{\bar k})=V^\star(\bar{b})$. 

For sub-problem $k\in\mathcal{S}_{\bar{b}}^\star$, we have from \cite[Theorems 6.2, 6.5]{borrelli2017predictive} that there exists a (convex) polyhedron of parameters $b$ given by $\mathcal{K}^k=\cup_{i=1}^{p_k}\mathcal{K}_i^{k}\subset\mathbb{B}$ such that each $\mathcal{K}_i^k$ is polyhedral, and $(z^\star(b,\mathrm{lb}_k,\mathrm{ub}_k), y^\star(b,\mathrm{lb}_k,\mathrm{ub}_k))$ are affine functions of $b$ for $b\in\mathcal{K}_i^k$. Define the set $\mathcal{Z}^k=\cup_{i=1}^{p_k}\{(z,y,b)\ |\ b\in\mathcal{K}_i^k, (z,y)=(z^\star(b,\mathrm{lb}_k,\mathrm{ub}_k), y^\star(b,\mathrm{lb}_k,\mathrm{ub}_k)) \}$ and for
 the optimal sub-problem $\bar{k}$, define the set $\mathcal{Z}^\star=\{(z,y,b)\ |\ (z,y,b)\in\mathcal{Z}^{\bar{k}},\ y=y^\star(\bar{b})\}$.

For any parameter $b\neq \bar{b}$, the solution  of sub-problem $\bar{k}$ is also optimal for the MILP \eqref{eq:LP_SP} at $b$ if 
\begin{align*}
&V^\star_{LP}(b, \mathrm{lb}^\star_{\bar{k}}, \mathrm{ub}^\star_{\bar{k}})=\min_{\substack{i\in\mathcal{S}_{\bar{b}}^\star}\backslash\{\bar{k}\}}  V^\star_{LP}(b, \mathrm{lb}^\star_{i}, \mathrm{ub}^\star_{i}),\\
&~y^\star(b, \mathrm{lb}_{\bar{k}}^\star, \mathrm{ub}_{\bar{k}}^\star))\in\{0,1\}^M
\end{align*}
and so, the strategy $s^\star(\bar{b})$ is optimal for $b$ if 
{\small{
\begin{align*}
    &V^\star_{LP}(b, \mathrm{lb}^\star_{\bar{k}}, \mathrm{ub}^\star_{\bar{k}})=V_{\mathrm{lb}}(b, s^\star(\bar{b})),~ y^\star(b, \mathrm{lb}_{\bar{k}}^\star, \mathrm{ub}_{\bar{k}}^\star))=y^\star(\bar{b})\\
    \Leftrightarrow &~c^\top z^{\bar{k}}+d^\top y^{\bar{k}} \leq c^\top z^{k}+d^\top y^{k},\\&~~(z^{\bar{k}},y^{\bar{k}},b)\in\mathcal{Z}^\star,(z^{k},y^{k},b)\in\mathcal{Z}^k~\forall k\in\mathcal{S}_{\bar{b}}^\star\backslash\{\bar{k}\}.
\end{align*}}}
Thus, the set of parameters for which $s^\star(\bar{b})$ is the optimal strategy is given by the set
{\small{
\begin{align*}
\mathcal{P}_{\bar{b}}=\left\{b\ \middle\vert\begin{aligned}\exists &(z^{\bar{k}},y^{\bar{k}},b)\in\mathcal{Z}^\star,\\\exists &(z^{k},y^{k},b)\in\mathcal{Z}^k~\forall k\in\mathcal{S}_{\bar{b}}^\star\backslash\{\bar{k}\}:\\&c^\top z^{\bar{k}}+d^\top y^{\bar{k}} \leq c^\top z^{k}+d^\top y^{k}
\end{aligned}\right\}
\end{align*}}}
which is a union of convex polyhedra ($\because$ affine projection of unions of convex polyhedra $\mathcal{Z}^\star, \mathcal{Z}^k$,  intersected by  affine halfspaces $c^\top z^{\bar{k}}+d^\top y^{\bar{k}} \leq c^\top z^{k}+d^\top y^{k}$).
\end{proof}
The sets $\mathcal{P}_{\bar{b}_i}$ can be constructed using ideas from multi-parametric programming, but this approach would become intractable as the size of the problem increases. Instead, we propose a supervised classification approach to predict an optimal strategy for a given parameter in the next section. For a predicted strategy  $\tilde{s}(b)$, the functions \eqref{eq:LB},\eqref{eq:UB} are used to quantify its sub-optimality compared to $s^\star(b)$. If no feasible solution is found or the predicted strategy is too sub-optimal, an optimal solution can be retrieved from $\tilde{\mathcal{C}}(b)$ by solving MILP sub-problems.
\section{LAMPOS: Learning-based Approximate MIMPC with Parametric Optimality Strategies}\label{sec:SL_MILP}
 This section presents LAMPOS: (A) an offline supervised learning framework for strategy prediction, and (B) an online algorithm for finding solutions to \eqref{eq:MIMPC}. The learning problem of predicting $s^\star(b)$ is split into two classification problems, from parameters $b$ to corresponding labels $(\gamma^\star,\upsilon^\star)$ for optimal cover $\mathcal{C}^\star(b)$ and binary solution $y^\star(b)$, respectively. The number of possible strategies/labels is exponential in the problem size, which would make the classification problem intractable as well. To address this issue, we construct our dataset with a limited number of strategies using the approach in \cite{bertsimas2022online}. For online deployment, the predicted strategy is used to obtain solutions to the \eqref{eq:MIMPC} using $R(\cdot)$, with sub-optimality quantification using $V_{\mathrm{lb}}(\cdot)$, $V_{\mathrm{ub}}(\cdot)$.
 \subsection{Offline Supervised Learning for Strategy Prediction}\label{ssec:LAMPOS_offline}
\subsubsection{Dataset Construction}\label{ssec:dataset_cnsrtct}
Our dataset consists of parameter-strategy pairs $(b_i,s(b_i))$ where the strategy $s(b_i)=(\gamma_i,~\upsilon_i)$ consists of a tuple of labels.
To determine the required number of strategies, given $M$ strategies $\mathcal{S}(\mathcal{B}_K)=\{s_1,s_2,...,s_M\}$ corresponding to $N$ independent parameter samples $\mathcal{B}_N=\{b_1,b_2,...b_N\}$, we assess the probability of encountering a new strategy with a new i.i.d. sample $b_{N+1}$, i.e., $ 
 \mathbb{P}(s(b_{N+1}) \notin \mathcal{S}(\mathcal{B}_N))$. As in \cite{bertsimas2022online}, we adopt the Good-Turing estimator $G = N_1/N$, where $N_1$ represents the number of strategies that have appeared exactly once, to bound this probability with confidence at least $1-\beta$ as:
{\small{\begin{align*}
\mathbb{P}\left(s\left(b_{N+1}\right) \notin \mathcal{S}\left(\mathcal{B}_N\right)\right) \leq G +c \sqrt{\frac{1}{N} \ln \left(\frac{3}{\beta}\right)}
\end{align*}}\vskip -0.1 true in}
where $c=(2\sqrt{2}+\sqrt{3})$. For a fixed confidence $\beta<<1$, we sample strategies and update $G$ until the right-hand side bound is less than a desired probability guarantee $\epsilon>0$.


\subsubsection{Architecture and Learning problem}

The classification problem for predicting the strategy, can be solved using popular prediction architectures such as Deep Feedforward Neural Networks (DNN) and Random Forests (RF), discussed as follows.
\subsubsection*{DNN-based Architecture}The DNN for cover prediction comprises $L$ layers composed together to define a function of the form
$\hat{\gamma}= f_L(h_{L-1}(...f_1(b)))$.  
The output of the $l$th layer is given by $y_l = f_l(y_{l-1}) = \sigma_l(W_l y_{l-1} + b_l)$
where $W_l \in \mathbb{R}^{n_l \times n_{l-1}}$ and $b_l \in \mathbb{R}^{n_l}$ are the layer's parameters, $y_0=b$, $y_L=\hat{\gamma}$ and $\sigma_l$ is the activation function used to model nonlinearities. For binary solution prediction for the MIMPC, we express $y^*(b)=[y^*_1(b),y^*_2(b),...,y^*_N(b)]$ to divide the classification problem into $N$ sub-problems,  corresponding to each step along the horizon $N$. Each sub-problem $j\in(1,2.., N)$ consists of finding the label $\nu_{j}  \in \Upsilon_j$ associated with the binary solution for step $j$ correspondent to the input parameter $b$, where $\Upsilon_{j}$ is the set of labels for sub-problem $j$, and is solved using a $K$ layer DNN with parameters $W_k \in \mathbb{R}^{n_k \times n_{k-1}}$ and $b_k \in \mathbb{R}^{n_k}$ returning a label estimation $\hat{\nu_{j}}$. The label for $\hat{y}(b)$ is given by the vector of labels $\hat{\upsilon}=[\hat{\nu_{1}},\hat{\nu_{2}},...,\hat{\nu_{N}}]$. This architecture makes the classification task easier than directly recovering the full binary solution $y^*(b)$  due to the high number of different binary solutions in the dataset. 
The training process for DNN consists of finding the network parameters that minimize a loss function that encodes misclassification error. For all the classification problems, the Cross Entropy loss function is chosen, defined as $H(p,q) = -\sum_i p_i \log(q_i)$, where $p$ is the true label distribution and $q$ is the predicted label distribution. The optimization problem for DNN training is solved using Stochastic Gradient Descent (SGD).
\vskip -0.15 true in
\subsubsection*{RF-based Architecture} \label{sss:RF-arch}The RF consists of multiple decision trees that are trained on random subsets of the training data, and the final prediction is made by aggregating the predictions of the individual trees. For the classification problems for binary and cover prediction, the Gini impurity criterion can be used as the splitting criterion, which measures the degree of impurity in a set of labels. The Gini impurity is defined as $\text{Gini}(p) = \sum_{i=1}^{K}p_i(1-p_i)$ where $p_i$ is the fraction of samples in a given set that belong to class $i$. The Gini impurity is minimized by selecting the split of the parameter space that maximizes the reduction in impurity, which is known as the greedy approach.%




\subsection{Online Deployment for MIMPC}
After training the prediction models offline, the online deployment of our approach for MIMPC is described in Algorithm ~\ref{algo:online_LMIMPC}. The inputs to the algorithm are the trained strategy prediction model $P_{\theta^\star}(\cdot)$, the state of system $x_t$ and the desired sub-optimality tolerance $tol$. 
{\small{
\begin{algorithm}[h]
    \caption{LAMPOS (Online)}\label{algo:online_LMIMPC}
    \DontPrintSemicolon
    \SetKwInOut{KwIn}{Input}
    \SetKwInOut{KwOut}{Output}
    \SetKwFunction{backup}{find\_sol}
    \SetKwFunction{MIMPC}{solve\_MIMPC}
    \KwIn{$P_{\theta^\star}(\cdot)$, $x_t$, $tol$}
    \KwOut{$\pi_{MPC}(x_t)$}   
    \SetKwProg{Fn}{Procedure}{$(x_t)$:}{\KwRet $\pi_{MPC}(x_t)$}
    \SetKwProg{Bu}{Backup}{$(\{(V_k,\mathrm{lb}_k,\mathrm{ub}_k)\}_{k=1}^{n_c})$:}{\KwRet $(\bar{V},\bar{z},\bar{y})$}
    \Fn{\MIMPC}{
    \tcc{Predict strategy}
    $[\ \tilde{\mathcal{C}}(x_t):=\{\{\tilde{\mathrm{lb}}_k,\tilde{\mathrm{ub}}_k\}\}_{k=1}^{n_c},~~ \tilde{y}(x_t)  ]\leftarrow P_{\theta^\star}(x_t)$\\
    \tcc{Add LP for fixing $\tilde{y}(x_t)$}
    $\tilde{\mathcal{C}}(x_t)\leftarrow \tilde{\mathcal{C}}(x_t)\cup \{(\tilde{y}(x_t),\tilde{y}(x_t))\}$\\
    \tcc{Solve LPs in parallel} 
    $\tilde{\mathcal{I}}\leftarrow \emptyset$, $\pi_{MPC}(x_t)\leftarrow \emptyset$ \\
    \textbf{par}\For{$k=1$ \KwTo $n_c+1$}{
         $(V_k, z_k, y_k)\leftarrow$\text{solve\_LP}$(x_t, \tilde{\mathrm{lb}}_k, \tilde{\mathrm{ub}}_k)$ \\
        \tcc{Collect MILP feasible $k$s}
         \uIf{$y_k\in\{0,1\}^M$}{
         $\tilde{\mathcal{I}}\leftarrow \tilde{\mathcal{I}}\cup\{k\}$
         }
        }
        \tcc{Check sub-optimality}
        $\{(\bar{V}_k,\bar{\mathrm{lb}}_k,\bar{\mathrm{ub}}_k)\}_{k=1}^{n_c}\leftarrow \text{sort}(\{(V_k,\tilde{\mathrm{lb}}_k,\tilde{\mathrm{ub}}_k)\}_{k=1}^{n_c})$\\
        $LB=\bar{V}_1$\\
        \uIf{$\tilde{\mathcal{I}}\neq \emptyset$}{$UB=\min_{k\in\tilde{\mathcal{I}}}V_k$, $z^\star\leftarrow z_{\argmin_{k\in\tilde{\mathcal{I}}}V_k}$\\
        \uIf{$\text{UB}-\text{LB}\leq tol\cdot|\text{LB}|$}{$\pi_{MPC}(x_t)=Sz^\star$}}
        \uIf{$\pi_{MPC}=\emptyset$}{
        \tcc{Call backup} 
        $(\bar{V}, \bar{z},\bar{y})\leftarrow$\backup($\{(\bar{V}_k,\bar{\mathrm{lb}}_k,\bar{\mathrm{ub}}_k)\}_{k=1}^{n_c}$)\\
        $\pi_{MPC}(x_t)=S\bar{z}$}
        }
        \Bu{\backup}{
        $V_{n_c+1}\leftarrow\infty, 
 ~(\bar{V},\bar{z},\bar{y})\leftarrow(\infty,\emptyset,\emptyset)$ \\
        \For{$k=1$ \KwTo $n_c$}{
        $(\hat{V}_k, \hat{z}_k, \hat{y}_k)\leftarrow$\text{solve\_MILP}$(x_t, \mathrm{lb}_k, \mathrm{ub}_k)$ \\
        $(\bar{V},\bar{z},\bar{y})\leftarrow\text{best\_sol}(\{(\hat{V}_i, \hat{z}_i,\hat{y}_i)\}_{i=1}^k)$\\
        \uIf{$\bar{V}\leq V_{k+1}$}{
        \textbf{break}}}
        }
\end{algorithm}
}}
The function \texttt{solve\_MIMPC}$(\cdot)$ returns the MPC policy $\pi_{MPC}(\cdot)$. Inside it, we first query the prediction model at the current state to obtain a strategy consisting of the cover $\tilde{\mathcal{C}}(x_t)$ and a candidate binary solution $\tilde{y}(x_t)$. The list of LP sub-problems in the cover is augmented with another LP by fixing the binary variable bounds to $\tilde{y}(x_t)$. Then the LPs are solved in parallel, while keeping track of MILP feasible solutions. The solved sub-problems are sorted in the increasing order of cost, with $\infty$ assigned to the cost of infeasible LPs. The lower bound $LB$ on the optimal cost is provided by the first LP sub-problem. The upper bound $UB$ is obtained from the best MILP feasible solution, if any. If the estimated sub-optimality $\frac{UB-LB}{|LB|}$ is within tolerance, the MPC policy is obtained as $Sz^\star$ where $z^\star$ is the LP solution corresponding to the upper bound and $S$ is a matrix that selects $u^\star_{t|t}$ from $z^\star$. If no MILP feasible solutions were found (meaning $\tilde{\mathcal{I}}=\emptyset$) or the predictions don't meet the sub-optimality tolerance, we send the sorted LP sub-problems to the backup procedure \texttt{find\_sol}$(\cdot)$ which solves a sequence of MILP sub-problems. The backup returns an optimal solution if the MILP \eqref{eq:MIMPC} is feasible, and nothing otherwise.
 \vskip -0.2 true in
\section{Numerical Experiments}\label{sec:num}
In this section we demonstrate the effectiveness of our approach for a motion planning problem and compare the performance against MILP solvers: GLPK-MI \cite{makhorin2008glpk}, SCIP \cite{achterberg2009scip}, Mosek \cite{aps2019mosek} and Gurobi \cite{gurobi}. Our implementation is available at: {\small{\SHN{ \url{https://github.com/shn66/LAMPOS}}}}.
\subsection{MIMPC for 2D Motion planning}  \label{ss:motion_pl}
\begin{figure}[!h]
 \label{fig:2Dmp}
 \vskip -0.2 true in
    \centering  
   \includegraphics[width=0.7\columnwidth]{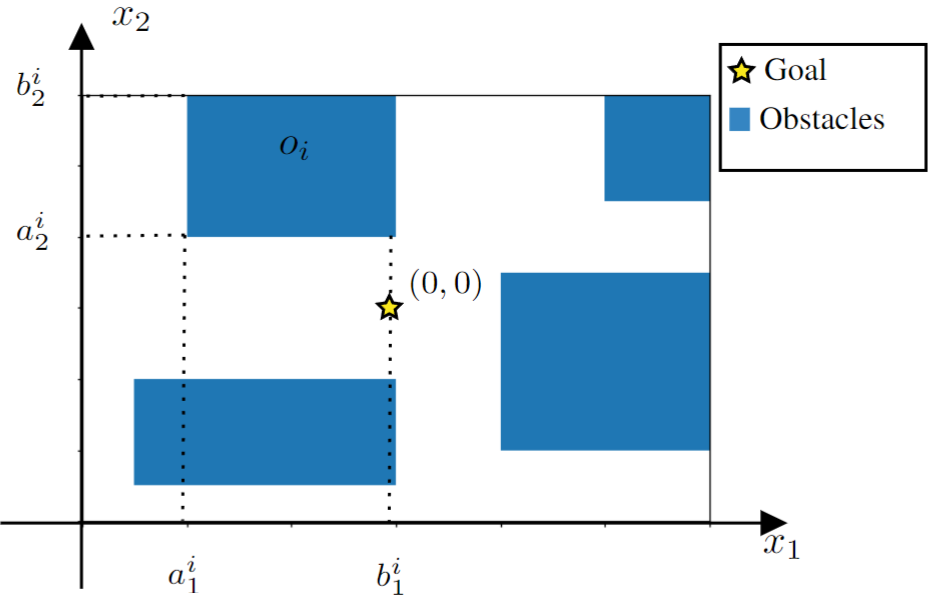}
    \vskip -0.2 true in
    {\small{
    \caption{Obstacle configuration of the 2D motion planning problem, with the $i$th  obstacle's shape:  $\{(X,Y)|~ [a^i_1,a^i_2]\leq [X,Y]\leq [b^i_1,b^i_2]\}$.}}}
    \vskip -0.25 true in
\end{figure}
The motion planning problem is to steer the robot to the origin subject to state-input and obstacle avoidance constraints as depicted in Fig.~1. The robot is modelled as a double integrator (Euler discretized at dt$=0.1s$) with state $x_t$, position $Cx_t$. Policy $\pi_{MPC}(x_t)$ is computed by solving the mp-MILP \eqref{eq:2D_motionplanning}, which is parametric in $x_t$. The obstacle avoidance constraints are encoded using the big-M method, with binary vectors $\underline{\delta}^i_{k|t},\bar{\delta}^i_{k|t} \in \{0,1\}^2$ introduced for each obstacle $i$ at time $k$, totalling  $4\cdot n_{obs}\cdot N$ binary variables for a prediction horizon of $N$ and $n_{obs}=4$ obstacles. The vectors $\bar{X}=-\underline{X}=[3,3,2,2], \bar{U}=-\underline{U}=[2,2]$ define the state-input constraints in \eqref{eq:2D_motionplanning}, and $Q=10^3 I_4,~ R=50 I_2, ~P=10^5I_4$ define the costs matrices. We model the problem using CVXPY and perform experiments for $N=20,40$. 
\begin{align}\label{eq:2D_motionplanning}
{
\small{\begin{aligned}
\min_{\mathbf{x}_t, \mathbf{u}_t, \boldsymbol{\delta}_t} & ~\left\|P x_{t+N|t}\right\|_\infty+\sum_{k=t}^{t+N-1} \left\|Qx_{k|t}\right\|_\infty+\left\|Ru_{k|t}\right\|_\infty \\ 
\text {s.t. } & ~~x_{k+1|t}=Ax_{k|t}+B u_{k|t},\\ 
& ~~\underline{X}\leq x_{k+1|t}\leq \bar{X},~ \underline{U}\leq u_{k|t}\leq\bar{U}, \\
& ~~b^i-\bar{\delta}_k^i M \leq Cx_{k+1|t} \leq a^i+M \underline{\delta}_k^i,\\
& ~~\mathbf{1}^\top\underline{\delta}^i_k+\mathbf{1}^\top\bar{\delta}^i_k\leq 3, \\
& ~~\underline{\delta}^i_k,\bar{\delta}^i_k\in\{0,1\}^2 ~ \forall i=1,..,n_{obs}\\
&~~ x_{t|t}=x_t,~~~~~~~\forall k=t,..,t+N-1 
\end{aligned}
}}
\end{align}
\subsection{Implementation Details}
\subsubsection{Dataset construction}For dataset construction we used SCIP optimization toolkit that allows us to access and save node information during the construction of the BnB tree for the MILP solution\footnote{If the LP sub-problems at the leaves of the BnB tree are unavailable, we provide a recursive algorithm  (@\SHN{github repo}) to construct a cover $\{\{\mathrm{lb}_k, \mathrm{ub}_k\}\}_{k=1}^{n_c}$ given the optimal sub-problem $\{\mathrm{lb}^\star, \mathrm{ub}^\star\}$, and a partial list of sub-problems $\{\{\mathrm{lb}_k, \mathrm{ub}_k\}\}_{k=1}^{n_p}$. The algorithm proceeds by adding disjoint \textit{facets} $[\mathrm{lb}_i, \mathrm{ub}_i] \subset [0,1]^M$ until $\cup_{k=1}^{n_c}[\mathrm{lb}_k, \mathrm{ub}_k]\supset \{0,1\}^M$.}. We randomly sample parameters $b=x_t$  and solve \eqref{eq:2D_motionplanning}. For each $b_i$ we collect the set of leaves of the BnB tree that represent our optimal cover $\mathcal{C}^*(b_i)$ and the binary solution $y^*(b_i)$. For eliminating redundant strategies we search within the dataset locally around $b_i$ and look for strategies $s(b_j)$ for which the optimality is maintained for $b_i$. For meeting the probability bound defined in Sec. \ref{ssec:dataset_cnsrtct}, we fix $\beta=10^{-3}, \epsilon=10^{-1}$. After data collection, we further process the dataset by reassigning strategies with covers with large number of LP sub-problems, to another strategy with the least sub-optimality and with fewer LP sub-problems than a pre-defined threshold (to limit the online computation).
 \subsubsection{Supervised learning} For strategy predictions, we use RF for the $N=20$ case and DNN for the $N=40$ case. For RF implementation we used the \textit{RandomForestClassifier} from \textit{sci-kit} setting  number of trees $n_t=10$ and used weighted tree splitting for both cover and binary solution classification to mitigate unbalanced-ness in the dataset. The RFs were trained until prediction accuracies $\geq 97\%$ are achieved for binary and cover predictions. We use Pytorch for our DNN implementation with architectures given by 2 hidden layers with width 64 for binary prediction, and 3 hidden layers with width 128 for cover prediction.
\subsection{Results}
We tested our approach for cases $N=20, 40$ by sampling different initial conditions $x_0$ and solve \eqref{eq:2D_motionplanning} to get the policy $\pi_{MPC}(\cdot)$ until the robot reaches the origin. For Algorithm \ref{algo:online_LMIMPC}, the LPs were solved using ECOS \cite{domahidi2013ecos} (for its rapid infeasibility detection) and SCIP for the backup MILP sub-problems. We compare LAMPOS against GLPK\_MI, SCIP, Mosek and Gurobi for solve-times. The solve-times of our approach are compared to other solvers in Fig. \ref{fig:solve_time_cmp},~\ref{fig:solve_time_cmpN40}.
 Our solve-times include prediction time and LP sub-problem solution times. In addition, we also solve \eqref{eq:2D_motionplanning} with SCIP, Mosek and Gurobi with a time-limit of 50ms for $N=20,40$ \footnote{No such interface for GLPK\_MI in CVXPY}, and compare against our approach for sub-optimality of the obtained solution (if any).  For each solver, we report the average sub-optimality of feasible solutions found and  \% of instances that it timed-out. 
 \begin{figure}[!h]
\vskip -0.1 true in
    \centering
    \includegraphics[width=0.7\columnwidth]{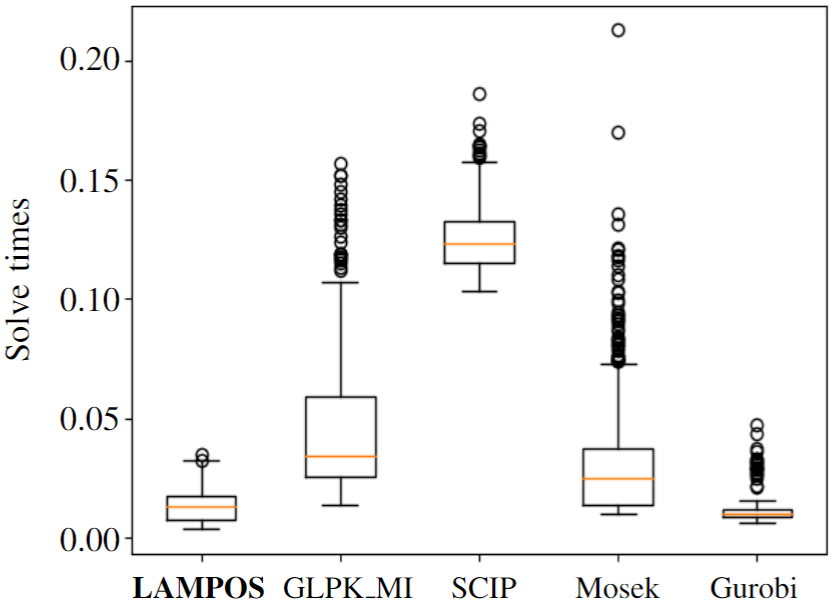}
    \vskip -0.15 true in
    \caption{Comparison with solution times of other solvers  for $N=20$}
    \label{fig:solve_time_cmp}
    \vskip -0.15 true in
\end{figure}
\begin{figure}[!h]
\vskip -0.1 true in
    \centering
    \includegraphics[width=0.68\columnwidth]{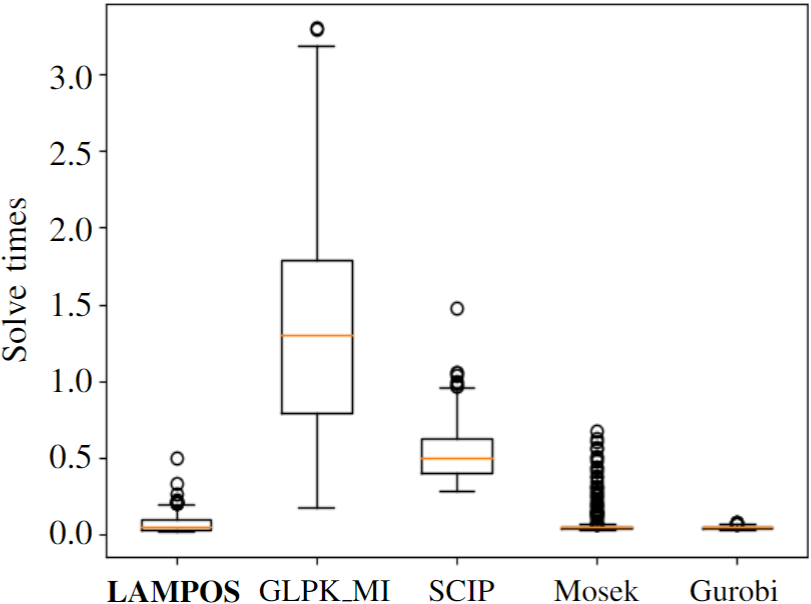}
    \vskip -0.15 true in
    \caption{Comparison with solution times of other solvers  for $N=40$}
    \label{fig:solve_time_cmpN40}
    \vskip -0.15 true in
\end{figure}
\begin{table}[!ht]
\vskip -0.14 true in
    \centering
    \caption{Performance comparison with 50ms solve-time limit}
    \vskip -0.1 true in
    \label{tab:comp}
    \resizebox{0.98\columnwidth}{!}{
    \begin{tabular}[c]{|c|c|c|c|c|c|}
        \hline
            \multirow{2}{*}{Horizon}&\multirow{2}{*}{Metric} & \multicolumn{4}{c|}{Solver}\\\cline{3-6}       
         & &\ LAMPOS & SCIP & Mosek & Gurobi\\
        \hline
        \hline
        \multirow{2}{*}{{\footnotesize{$N$=20}}}& Sub-opt (Avg) & \textbf{0.04} & 0.34 & 0.16 & \textbf{1e-8} \\
        \cline{2-6} &
       Time-out (\%)& $\textbf{0}$ & $\textbf{0}$&  0.2& $0.8$ \\
        \hline
        \multirow{2}{*}{{\footnotesize{$N$=40}}}& Sub-opt (Avg)    & \textbf{0.07} & - & 0.2 & \textbf{1e-10} \\
        \cline{2-6} &
       Time-out (\%)    & 18.6 & 100 & 22.7 &  \textbf{10.8}\\
        \hline
    \end{tabular}}\vskip -0.18 true in
\end{table}
\subsubsection*{\textbf{Discussion}} In Fig. \ref{fig:solve_time_cmp}, \ref{fig:solve_time_cmpN40} for solve-times, we see that LAMPOS outperforms open-source solvers GLPK\_MI, SCIP and is comparable to  Mosek, Gurobi. Table \ref{tab:comp} shows that LAMPOS, Gurobi reliably find high-quality solutions within the time limit compared to SCIP, Mosek. In our experiments, we observed competitive solve-times for LAMPOS when $\tilde{y}(b)=y^\star(b)$, but also quick recovery otherwise by reusing the LP sub-problem information from $\tilde{\mathcal{C}}(b)$ during backup calls. For future investigation, we would like to avoid the parallel solution of the LP sub-problems \eqref{eq:LP_SP}, by exploiting their parametric dependence in $(b, \mathrm{lb}, \mathrm{ub})$. Thus, solutions of the LPs can be predicted in parallel with sub-optimality quantification as in \cite{zhang2020near}, to further decrease solve-times.
\section{Conclusion} 
\vskip -0.08 true in
We proposed a strategy-based prediction framework to solve mp-MILPs online with sub-optimality quantification, and demonstrate it for real-time MIMPC. By exploiting the parametric nature of the optimality certificate for mp-MILPs given by the optimal set of LP sub-problems and an optimal integer solution, we observed favourable performance compared to state-of-the-art MILP solvers. For future work, we aim to include prediction models for solving the parametric LP sub-problems to further improve solve-times.
\vskip -0.1 true in
\bibliographystyle{ieeetr}
\vskip -0.1 true in
\bibliography{root.bib}

\begin{thebibliography}{10}

\bibitem{ioan2021mixed}
D.~Ioan, I.~Prodan, S.~Olaru, F.~Stoican, and S.-I. Niculescu, ``Mixed-integer
  programming in motion planning,'' {\em Annual Reviews in Control}, vol.~51,
  2021.

\bibitem{marcucci2022motion}
T.~Marcucci, M.~Petersen, D.~von Wrangel, and R.~Tedrake, ``Motion planning
  around obstacles with convex optimization,'' {\em arXiv preprint
  arXiv:2205.04422}, 2022.

\bibitem{tokuda2021convex}
S.~Tokuda, M.~Yamakita, H.~Oyama, and R.~Takano, ``Convex approximation for
  ltl-based planning,'' in {\em 2021 IEEE/RSJ International Conference on
  Intelligent Robots and Systems (IROS)}, IEEE, 2021.

\bibitem{heemels2001equivalence}
W.~P. Heemels, B.~De~Schutter, and A.~Bemporad, ``Equivalence of hybrid
  dynamical models,'' {\em Automatica}, vol.~37, 2001.

\bibitem{morari1999model}
M.~Morari and J.~H. Lee, ``Model predictive control: past, present and
  future,'' {\em Computers \& Chemical Engineering}, 1999.

\bibitem{borrelli2017predictive}
F.~Borrelli, A.~Bemporad, and M.~Morari, {\em Predictive control for linear and
  hybrid systems}.
\newblock Cambridge University Press, 2017.

\bibitem{bemporad2000optimal}
A.~Bemporad, F.~Borrelli, and M.~Morari, ``Optimal controllers for hybrid
  systems: Stability and piecewise linear explicit form,'' in {\em Proceedings
  of the 39th IEEE Conference on Decision and Control}, vol.~2, IEEE, 2000.

\bibitem{cimini2017exact}
G.~Cimini and A.~Bemporad, ``Exact complexity certification of active-set
  methods for quadratic programming,'' {\em IEEE Transactions on Automatic
  Control}, vol.~62, 2017.

\bibitem{masti2019learning}
D.~Masti and A.~Bemporad, ``Learning binary warm starts for multiparametric
  mixed-integer quadratic programming,'' in {\em 2019 18th European Control
  Conference (ECC)}, IEEE, 2019.

\bibitem{zhu2020fast}
J.-J. Zhu and G.~Martius, ``Fast non-parametric learning to accelerate
  mixed-integer programming for hybrid model predictive control,'' {\em
  IFAC-PapersOnLine}, vol.~53, 2020.

\bibitem{srinivasan2021fast}
M.~Srinivasan, A.~Chakrabarty, R.~Quirynen, N.~Yoshikawa, T.~Mariyama, and
  S.~Di~Cairano, ``Fast multi-robot motion planning via imitation learning of
  mixed-integer programs,'' {\em IFAC-PapersOnLine}, vol.~54, 2021.

\bibitem{bertsimas2022online}
D.~Bertsimas and B.~Stellato, ``Online mixed-integer optimization in
  milliseconds,'' {\em INFORMS Journal on Computing}, vol.~34, 2022.

\bibitem{cauligi2021coco}
A.~Cauligi, P.~Culbertson, E.~Schmerling, M.~Schwager, B.~Stellato, and
  M.~Pavone, ``Coco: Online mixed-integer control via supervised learning,''
  {\em IEEE Robotics and Automation Letters}, vol.~7, 2021.

\bibitem{zhang2021understanding}
C.~Zhang, S.~Bengio, M.~Hardt, B.~Recht, and O.~Vinyals, ``Understanding deep
  learning (still) requires rethinking generalization,'' {\em Communications of
  the ACM}, vol.~64, 2021.

\bibitem{zhang2020near}
X.~Zhang, M.~Bujarbaruah, and F.~Borrelli, ``Near-optimal rapid mpc using
  neural networks: A primal-dual policy learning framework,'' {\em IEEE
  Transactions on Control Systems Technology}, vol.~29, 2020.

\bibitem{makhorin2008glpk}
A.~Makhorin, ``Glpk (gnu linear programming kit),'' {\em http://www. gnu.
  org/s/glpk/glpk. html}, 2008.

\bibitem{achterberg2009scip}
T.~Achterberg, ``Scip: solving constraint integer programs,'' {\em Mathematical
  Programming Computation}, vol.~1, 2009.

\bibitem{aps2019mosek}
M.~ApS, ``Mosek optimization toolbox for matlab,'' {\em User’s Guide and
  Reference Manual, Version}, vol.~4, 2019.

\bibitem{gurobi}
{Gurobi Optimization, LLC}, ``{Gurobi Optimizer Reference Manual}.''

\bibitem{domahidi2013ecos}
A.~Domahidi, E.~Chu, and S.~Boyd, ``Ecos: An socp solver for embedded
  systems,'' in {\em European control conference (ECC)}, IEEE, 2013.

\end{thebibliography}

\end{document}